\newtheoremstyle{fancy}{}{}{\itshape}{}{\textsc\bgroup}{.\egroup}{ }{}
\newtheoremstyle{fancy2}{}{}{\rm}{}{\textsc\bgroup}{.\egroup}{ }{}
\theoremstyle{fancy}
\newcounter{intro}
\numberwithin{equation}{section}    \swapnumbers
\newtheorem{lem}[equation]{Lemma}
\newtheorem{prop}[equation]{Proposition}
\newtheorem{thm}[equation]{Theorem}
\newtheorem{named}[equation]{\name}
\newcommand{\name}{Proof of}
\theoremstyle{fancy2}
\newcounter{axa}
\newtheorem{rem}[equation]{Remark}
\newcommand{\cref}[1]{Corollary~\ref{#1}}
\newcommand{\vol}{\operatorname{vol}}
\newcounter{subequation} 
 \newlength\mtabskip\mtabskip=-1.25cm
	 \def\mtabLong{long} 
\newcommand{\eq}[1]{\begin{equation}#1\end{equation}}
\newcommand{\alg}[1]{\begin{aligned}#1\end{aligned}}
\newcommand{\bc}{\begin{cases}}
\newcommand{\ec}{\end{cases}}
\newcommand{\Ab}[1]{\|#1\|}
\newcommand{\ab}[1]{|#1|}
\newcommand{\Abi}[1]{\|#1\|_{\infty}}
\newcommand{\al}{\alpha}
\newcommand{\la}{\lambda}
\newcommand{\lng}[1]{\langle{#1}\rangle}
\newcommand{\La}{\Lambda}
\newcommand{\om}{\omega}
\newcommand{\De}{\Delta}
\newcommand{\na}{\nabla}
\newcommand{\hp}{\hat{p}}
\newcommand{\mcr}[1]{\mathscr{#1}}
\begin{document}


\title[]{Area inequalities for stable marginally outer trapped surfaces 
in Einstein-Maxwell-dilaton theory}
\author{David Fajman, Walter Simon}
\address{
Gravitational Physics, Faculty of Physics\\
University of Vienna\\
Boltzmanngasse 5, 1090 Vienna \\
Austria}
\email{David.Fajman@univie.ac.at\\Walter.Simon@univie.ac.at}

\thanks{preprint UWThPh-2013-18}

\date{\today}

\subjclass{51P05,83C57,49S05}
\keywords{marginally outer trapped surface, area inequality,
Einstein-Maxwell-dilaton black hole}

\begin{abstract}
We prove area inequalities for stable marginally outer trapped surfaces in
Einstein-Maxwell-dilaton theory. Our inspiration comes
on the one hand from a corresponding upper bound for the area 
in terms of the charges obtained recently by Dain, Jaramillo and Reiris \cite{DJR}
in the pure Einstein-Maxwell case without symmetries, and on the other hand
from  Yazadjiev's inequality \cite{SY} in the axially symmetric Einstein-Maxwell-dilaton case. 
The common issue in these proofs and in the present one is a functional
${\mathscr W}$ of the matter fields for which the stability condition readily
yields an {\it upper} bound. On the other hand, the step which crucially 
depends on whether or not a dilaton field is present is to obtain a {\it lower} 
bound for ${\mathscr W}$ as well. We obtain the latter 
by first setting up a variational principle for ${\mathscr W}$ with respect to the dilaton field
$\phi$, then by proving existence of a minimizer $\psi$ as solution 
of the corresponding Euler-Lagrange  equations 
and finally by estimating ${\mathscr W}(\psi)$. 
In the special case that the normal components of the 
electric and magnetic fields are proportional we obtain
the area bound $A \ge 8\pi P Q$ in terms of the electric and magnetic
charges. In the generic case our results are less explicit
but imply  rigorous `perturbation' results for the above inequality.
 All our inequalities are saturated for a 2-parameter family of static, 
extreme solutions found by Gibbons \cite{GA}.
Via the Bekenstein-Hawking relation $A = 4S$ our results give positive 
lower bounds for the  entropy $S$ which are particularly interesting in the 
Einstein-Maxwell-dilaton case.

\end{abstract}

\maketitle


\section{Introduction}

\subsubsection*{Geometric inequalities for MOTS}

Marginally outer trapped surfaces (MOTS) are compact 2-surfaces in spacetime on which the 
orthogonally outgoing family of null geodesics has vanishing expansion. 
In applications, `stable' and `strictly stable' MOTS play an important role and have been studied
 thoroughly \cite{RN}-\cite{JJ1}. Here stability is a
mild restriction which in essence requires that a certain 3-dim.\/\,neighbourhood of the MOTS 
can be foliated by trapped surfaces inside and untrapped surfaces outside.
While the stability condition came up first in connection with the topology 
theorems for MOTS \cite{RN, GG}, it also plays a key role in problems concerning 
the time evolution of initial data containing MOTS \cite{AMMS}, 
in particular in some versions of the singularity theorems \cite{IC}.

Recently there have been proven some remarkable area inequalities for stable
MOTS \cite{DJR}, \cite{SY}, \cite{SD1}-\cite{GJR}. They provide lower bounds for the area $A$ in terms of other
naturally available  parameters like electric and magnetic charges $P,Q$, 
the cosmological constant $\Lambda$ and  the angular momentum $J$ in the axially symmetric
case. The typical general form is $A \ge A_0(P,Q, \Lambda, J)$ for some
constant $A_0$ (but for $\Lambda > 0$ there are in addition upper bounds on
$A$, cf. Sect. 4.). These inequalities are purely quasilocal in the sense that they involve 
only the geometry in a neighborhood of the MOTS. In particular they are
$\it a ~priori$ not related to the more familiar Penrose inequalities 
(and conjectures) \cite{MM} which bound the ADM-mass $M$ from {\it below} in terms of 
the area of a MOTS and other quantities, viz.\/\,$M \ge M_0(A, P, Q, \Lambda, J)$.
 On the other hand, the area inequalities obviously provide some sort of `missing link' between Penrose-type
inequalities and `improved positive mass theorems' (`Bogomolny bounds'; cf.\,e.g.\,\cite{BC}-\cite{SZ})  by which we mean inequalities of the form $M\ge M_0'(P,Q, \Lambda, J)$, 
which are valid (or conjectured) irrespective of the presence of any MOTS (see also \cite{DKWY}).     

\subsubsection*{ MOTS in EMD spacetimes}
The derivation of the `basic' version of the area inequality, 
which needs suitable matter (-parameters) but no symmetries, is straightforward in principle.
If the MOTS has spherical topology (which is guaranteed if the energy-momentum tensor
$T_{\alpha\beta}$ satisfies the dominant energy condition) the stability condition implies 
(cf. Lemma (\ref{L1}))
\begin{equation}
\label{stb}
 \int_{\mathscr S} T_{\alpha\beta} k^{\alpha} \ell^\beta dS \le \frac{1}{2}.
\end{equation}
Here $k^{\alpha}$ and $\ell^\beta$ are future oriented null vectors orthogonal to the MOTS
scaled such that $\ell^\alpha k_{\al} = -1$, and our units are such that
Einstein's equations, which have been used to get (\ref{stb}), read $G_{\alpha\beta} = 8 \pi T_{\alpha\beta}$.
Now the remaining task consists of estimating the `surface energy' 
(\ref{stb}) from below in terms of the available parameters. In this process, the area 
either comes up automatically or can be brought in naturally.

In the present work we consider the Einstein-Maxwell-dilaton (EMD) system given
by the Lagrangian    
\begin{equation}
\label{emd}
\mcr L =  \left( R - 2 g^{\alpha\beta} \phi_{\alpha}\phi_{\beta} - e^{2 c \phi}
F_{\alpha\beta}F^{\alpha\beta} \right)
\end{equation}

where $\phi$ is the dilaton, $\phi_{\alpha} = \nabla_{\alpha} \phi$ its
gradient, $F_{\alpha\beta} = 2 \nabla_{[\alpha} A_{\beta]}$ is the electromagnetic field tensor 
with vector potential $A_{\beta}$ and $c$ a coupling constant.
For this system we can reformulate (\ref{stb}) as (cf.\/\,Lemma (\ref{L2}))
\begin{equation}
\label{Wphi}
 {\mathscr W(\phi)} =   \int_{\mathscr S} \left( \ab{D\phi}^2 + e^{-2c \phi} q^2   + e^{2c \phi} p^2
\right) dS  =
8\pi   \int_{\mathscr S} T_{\alpha\beta} k^{\alpha}\ell^\beta \le 4 \pi 
\end{equation}
where $D$ is the intrinsic derivative on ${\mathscr S}$. Now ${\mathscr W}(\phi)$ needs to be 
estimated from below. In the Einstein-Maxwell (EM) case ($\phi = 0$) this gives
\cite{DJR}
\begin{equation}
\label{aem}
A \ge 4\pi \left(Q^2 + P^2 \right)
\end{equation}
and some stronger estimate if a cosmological term is considered as well \cite{WS}.

\subsubsection*{Main results of the present work}
In the presence of the dilaton field our strategy differs  substantially from the
EM case. We start with minimizing $\mcr W(\phi)$ in (\ref{Wphi}) with respect to $\phi$,
and show existence of smooth solutions to the corresponding Euler-Lagrange 
equations by standard methods, namely via sub- and supersolutions.  Depending on how one
chooses these 
latter solutions the subsequent steps are different. 
We first (in Sect.\/\,3.1.3) consider the special case where the normal
components of the electric and magnetic fields, denoted by $q$ and $p$, are proportional on ${\mathscr S}$,
in which case we obtain the bound 
\begin{equation}
\label{apq}
A \ge 8\pi |P Q|.
\end{equation} 
The latter estimate is in fact saturated for a 2-parameter family of static, extreme black holes with coupling $c=1$ 
which are members of a 3-parameter family of static, generic solutions found by Gibbons \cite{GA}.
As to  the case with arbitrary $p$ and $q$, we first derive a rather crude bound (Thm.\,(\ref{T1})) which has the advantage of involving only 
the charges $Q$ and $P$ as well as $\sup_{\mathscr S}|q/p|$ and $\inf_{\mathscr S}|q/p|$. 
We then proceed  with a rather sophisticated estimate (Prop.\,(\ref{P1}))  which involves some global 
geometric parameters of $\mcr S$. This inequality implies in particular a quantitative stability statement 
for (\ref{apq}) (Thm.\,(\ref{T2})) in the sense that if
$|q/p|$ is almost constant, the resulting inequalities are close to (\ref{apq}).    

\subsubsection*{The axially symmetric case}
While our reasoning holds without assuming any symmetries, it certainly applies to the 
axially symmetric setting. However, in this case the natural and more ambitious goal is to obtain 
inequalities which explicitly contain the (Komar-) angular momentum $J$ on the r.h.s., and which are still saturated in non-trivial examples. 
For this task, the basic stability inequality (\ref{stb}) is insufficient and has to be replaced by a 
suitably adapted  variational principle. In the EM case  \cite{GJR}
obtained
\begin{equation}
\label{DJGR}
A \ge 4\pi \sqrt{4 J^2 + \left(Q^2 + P^2 \right)}  
\end{equation}
with equality in the extreme Kerr-Newman case.    
(This inequality was proven before in the stationary case \cite{HCA}). 
Similarly, for the EMD system (\ref{emd}) 
Yazadjiev  \cite{SY} showed that
\begin{equation}
\label{SY}
A \ge 8\pi \sqrt{ |J^2 - Q^2 P^2 |}
\end{equation}
which is saturated for the extreme, stationary Kaluza-Klein black holes (coupling $c= \sqrt{3}$)
and for the static ($J=0$) Gibbons solutions mentioned above. This latter example
therefore provides some connection with our results.

\subsubsection*{MOTS thermodynamics} A physical significance of the area inequalities comes from the
Bekenstein-Hawking relation $A = 4 S$ for the entropy $S$, (provided its
validity extends to MOTS, cf e.g.\,\cite{CS,AN}). In any case, the EMD case is of special interest
in black hole thermodynamics. In particular, in the extreme limit of that 
subfamily of the generic Gibbons solutions for which either $Q$ or $P$ vanishes,
one approaches a `black hole of zero area' - in fact a singularity - 
while the surface gravity stays finite. The (sloppy) thermodynamic interpretation of this 
object is `a black hole of zero entropy with non-zero temperature',
which has sparked some (serious) discussion (see e.g.\,\cite{HWIL,HH,GH}).
Our results are relevant for this issue in the sense that they seem to provide strictly positive 
lower bounds for the entropy. This will be discussed in Sect.\,5.   

\subsection*{Acknowledgement} We are grateful to Piotr Chru\'sciel for helpful
comments. W.S. was funded by the Austrian Science Fund (FWF): P 23337-N16.

\section{Preliminaries}

\subsection{Stable MOTS}
Here we consider a smooth, compact, connected spacelike 2-surface ${\mathscr S}$
in a spacetime $({\mathscr M},g_{\alpha\beta})$. 
Let  $k^{\alpha}$ and $\ell^\beta$ be future directed null vectors defined in a neighborhood of
${\mathscr S}$,  orthogonal to ${\mathscr S}$ on ${\mathscr S}$, and scaled such that  $k^{\alpha}
\ell_{\alpha} = -1$.  
The vector $\ell^\alpha$ is called \emph{outgoing}; its null expansion is defined as
\begin{equation}
\theta = h^{\alpha\beta} \nabla_{\alpha} \ell_{\beta},
\end{equation} 
where $h^{\alpha\beta}  =
(g^{\alpha\beta} + 2 k^{(\alpha} \ell^{\beta)})$ is the (inverse) 
metric on ${\mathscr S}$, 
and a MOTS is a surface ${\mathscr S}$ with $\theta \equiv 0$.
To define a stable MOTS with respect to a normal direction $m^{\alpha}$
\cite{RN,AMS2} we consider a foliation of a 3-neighborhood of ${\mathscr S}$ in the direction
$m^{\alpha}$ by a 1-parameter family of 2-surfaces ${\mathscr S}(\lambda)$, and we require that
the variation of $\theta$ in direction $m^{\alpha}$ is positive, viz. 
\begin{equation}
\delta_{\gamma m} \theta \ge 0,  
\end{equation}
where $\gamma > 0$ is the  lapse-function on ${\mathscr S}$ 
corresponding to the foliation ${\mathscr S}(\lambda)$.

We now recall the following Lemma \cite{RN,GG}.
\begin{lem}
\label{L1}
For a stable MOTS ${\mathscr S}$,
\begin{equation}
\label{varth}
0 \le  \int_{\mathscr S} \gamma^{-1} \delta_{\gamma m} \theta  \le
\int_{\mathscr S} \left[ \frac{1}{2}  R - 8\pi 
T_{\alpha\beta}\ell^\alpha k^{\beta} \right] dS = 
4\pi \left[(1 - g) - 2 \int_{\mathscr S}  T_{\alpha\beta}\ell^\alpha k^{\beta}  dS
\right]
\end{equation}
where $g$ is the genus of ${\mathscr S}$ and $R$ is the scalar curvature of ${\mathscr S}$.
\end{lem}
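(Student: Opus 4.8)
The plan is to combine the standard first-variation formula for the null expansion $\theta$ with a single integration by parts, exploiting $\gamma>0$, and to finish with Gauss--Bonnet.

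First I would recall the Schrödinger-type expression for the variation of $\theta$ along $\gamma m^\alpha$ on a surface with $\theta\equiv 0$ (see \cite{RN,GG}):
\begin{equation*}
\delta_{\gamma m}\theta = -\Delta\gamma + 2\, s^a D_a\gamma + \gamma\Big( \tfrac12 R - 8\pi T_{\alpha\beta}\ell^\alpha k^\beta + D_a s^a - |s|^2 - \mathcal Q\Big),
\end{equation*}
where $\Delta=D_aD^a$ is the Laplacian of the induced metric, $s_a$ is the normal connection one-form of ${\mathscr S}$ associated with the null frame $k^\alpha,\ell^\alpha$, the Einstein equations $G_{\alpha\beta}=8\pi T_{\alpha\beta}$ have been used to trade the ambient curvature term for $T_{\alpha\beta}\ell^\alpha k^\beta$, and $\mathcal Q\ge 0$ collects the manifestly non-negative quadratic contributions (essentially the squared shear of ${\mathscr S}$). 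With this formula the left inequality is immediate: by hypothesis $\gamma>0$ and $\delta_{\gamma m}\theta\ge 0$ pointwise on ${\mathscr S}$, hence $\gamma^{-1}\delta_{\gamma m}\theta\ge 0$ and its integral is non-negative.

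For the middle inequality I would divide the formula by $\gamma$ and integrate over the closed surface ${\mathscr S}$. Integrating $\gamma^{-1}(-\Delta\gamma)$ by parts produces $-\int_{\mathscr S}|D\ln\gamma|^2\,dS$; the total-divergence term $\int_{\mathscr S} D_a s^a\,dS$ vanishes; and the drift term gives $2\int_{\mathscr S} s^a D_a\ln\gamma\,dS$. These three contributions assemble into the single non-positive term $-\int_{\mathscr S}|D\ln\gamma-s|^2\,dS$. Discarding this term, and likewise dropping $-\int_{\mathscr S}\mathcal Q\,dS\le 0$, leaves
\begin{equation*}
\int_{\mathscr S}\gamma^{-1}\delta_{\gamma m}\theta\,dS\;\le\;\int_{\mathscr S}\Big(\tfrac12 R - 8\pi T_{\alpha\beta}\ell^\alpha k^\beta\Big)\,dS .
\end{equation*}
Finally Gauss--Bonnet gives $\int_{\mathscr S}\tfrac12 R\,dS = 2\pi\chi({\mathscr S}) = 4\pi(1-g)$, so the right-hand side equals $4\pi\big[(1-g)-2\int_{\mathscr S} T_{\alpha\beta}\ell^\alpha k^\beta\,dS\big]$, which is the asserted identity.

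There is no serious analytic content here; everything is a pointwise identity plus one integration by parts. The step that will need care is the first-variation formula: fixing the signs and, above all, checking that everything other than $\tfrac12 R - 8\pi T_{\alpha\beta}\ell^\alpha k^\beta$ is either a total divergence or a sum of squares — so that the perfect-square completion $-|D\ln\gamma-s|^2$ goes through and no energy condition is required. Once that is in place, the remaining steps are routine.
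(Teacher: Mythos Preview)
Your proof is correct and follows the standard argument. The paper itself does not give a detailed proof of this lemma; it simply recalls the statement from \cite{RN,GG} and notes that the final equality uses Gauss--Bonnet. Your sketch is precisely the computation underlying those references: the first-variation formula for $\theta$, division by $\gamma$ and integration, completion of the square $-|D\ln\gamma - s|^2$, and Gauss--Bonnet for the curvature integral.
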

Here the final step of the proof uses the  Gauss-Bonnet theorem.

\subsection {Einstein-Maxwell-dilaton theory}

We consider the EMD Lagrangian of the form (\ref{emd})
with $\phi$, $A_{\alpha}$ and $g_{\alpha\beta}$ in the Sobolev space $H^3(\mcr M)$.
Our notation in essence follows Gibbons \cite{GA} who denotes the
coupling constant by $g$. We can assume $c > 0$ without loss of generality 
because of the symmetry ($c \rightarrow -c$, $\phi \rightarrow - \phi$)  
of the Lagrangian.
$c = \sqrt{3}$ is the Kaluza-Klein coupling \cite{GA,HWIS}.
 
The field equations obtained from $\mcr L$ (\ref{emd}) can be written as
\begin{eqnarray}
\Box \phi &  = & \frac{c}{2} e^{2c\phi} F^{\alpha\beta}F_{\alpha\beta} \\
\label{divF}
\nabla_{\alpha} \left( e^{2 c \phi} F^{\alpha\beta} \right)& = & 0 \\
R_{\alpha\beta} - \frac{1}{2} g_{\alpha\beta} R & = & 8\pi T_{\alpha\beta} 
\end{eqnarray} 
where
\begin{equation}
\label{emt}
8\pi T_{\alpha\beta} = 2 \left( \phi_{\alpha} \phi_{\beta} - \frac{1}{2}
g_{\alpha\beta} \phi_{\gamma} \phi^{\gamma} \right) + 2 e^{2c\phi} \left(
F_{\alpha}^{~\gamma} F_{\gamma\beta} - \frac{1}{4}
g_{\alpha\beta} F_{\gamma\delta} F^{\gamma\delta} \right)
\end{equation} 

We next define $q$ and $p$, the normal components on ${\mathscr S}$ of the electric and magnetic fields
by
\begin{equation}
\label{pq}
q = e^{2 c \phi} F_{\alpha\beta}
\ell^\alpha k^{\beta}  \qquad  p = *F_{\alpha\beta} \ell^\alpha k^{\beta} 
\end{equation}

\noindent and for the two-surface ${\mathscr S}$ we define the charges

\begin{equation}
\alg{
\label{PQ}
Q =  \frac{1}{4\pi} \int_{\mathscr S} q dS,  \qquad P = \frac{1}{4\pi} \int_{\mathscr S} p dS }.
\end{equation}
 By virtue of (\ref{divF}), $Q$ and $P$ are
independent of ${\mathscr S}$ which suggests (\ref{pq}) as 
reasonable definitions, even when as below only a fixed MOTS is considered. 
On the other hand, from the field equations there does not arise any natural definition of a dilaton 
charge of a 2-surface (see however Eq.\/\,(\ref{DC})). 
The regularity $\left(\phi,A_{\alpha}, g_{\alpha\beta}\right) \in H^3(\mcr M)$
assumed above guarantees that $q$ and $p$ are in $H^2(\mcr S)$ which will be used
in the proofs of the area inequalities in Sect.\,3.

\subsection{Static solutions}
We recall here first the 3-parameter Reissner-Nordstr\"om family 
in the EM case and then the 3-parameter family of static, spherically symmetric black hole
solutions for the coupling $c = 1$ found by Gibbons \cite{GA} which plays a canonical 
role in our estimates.

\subsubsection{Reissner-Nordstr\"om (coupling $c=0$)}
For ease of comparison with the Gibbons solution we write it in non-standard form, namely
\begin{eqnarray}
\eta & = & \frac{Q}{\rho + M} \qquad \chi = \frac{P}{\rho + M} \\
ds^2 &  = & - \lambda dt^2 + \lambda^{-1}\left[ d\rho^2 + \left( \rho^2 - L^2
\right) d\Omega^2 \right]
\end{eqnarray}
Here 
\begin{equation}
\lambda = \frac{\rho^2 - L^2}{(\rho + M)^2},
\end{equation}
the electric and magnetic potentials $\eta$ and $\chi$ are defined
from the vector potential $A_{\alpha}$ via 
\begin{equation}
\label{pots}
A_\alpha = (\eta, A_i), \qquad
B_i = \epsilon_{i}^{~jk} \partial_j A_k  \quad \mbox{and} \quad 
\lambda  B_i = - \partial_i \chi,
\end{equation}
and $M$ and $L^2 = M^2 - Q^2 -P^2$ are constants  with $L\ge 0$.
The coordinate $\rho$ is related to the usual radial coordinate
$r$ via $r = \rho + M$. The (outer) horizon ${\mathscr S}$ 
is located at $\rho =  L$; its  area and surface gravity are
\begin{equation}
\label{arn}
A = 4\pi \left( L + M \right)^2, \qquad 
\kappa = \frac{1}{2} \partial_{\rho}\lambda |_{\mathscr S} = 
\frac{L}{\left(L + M \right)^2}.
\end{equation}
The extreme solutions are characterized by $\kappa = 0 = L$; their 
area (\ref{arn}) is
\begin{equation}
A = 4\pi M^2 = 4\pi \left(Q^2 + P^2 \right)
\end{equation} 

\subsubsection{The Gibbons solution \cite{GA} (coupling $c = 1$). \label{gib}}

This family is given by
\begin{eqnarray}
e^{2\phi} = \frac{\rho + E}{\rho + F} \qquad \eta & = & \frac{Q}{\rho + E} \qquad \chi = \frac{P}{\rho + F} \\
ds^2 &  = & - \lambda dt^2 + \lambda^{-1}\left[ d\rho^2 + \left( \rho^2 - L^2
\right) d\Omega^2 \right]
\end{eqnarray}
Here 
\begin{equation}
\lambda = \frac{\rho^2 - L^2}{(\rho + E)(\rho + F)},
\end{equation}
the electric and magnetic potentials $\eta$ and $\chi$ are defined
as (\ref{pots}) except that now the dilaton enters in the definition of $\chi$
\begin{equation}
\label{pots}
A_\alpha = (\eta, A_i), \qquad
B_i = \epsilon_{i}^{~jk} \partial_j A_k  \quad \mbox{and} \quad 
\lambda e^{2 \phi} B_i = - \partial_i \chi,
\end{equation}
and $E$, $F$ and $L^2 = E^2 - 2 Q^2  = F^2 - 2P^2$ are constants  with $L\ge 0$.
The (outer) horizon ${\mathscr S}$ 
is again located at $\rho =  L$; its  area and surface gravity are
\begin{equation}
\label{gib}
A = 4\pi ( L + E)(L + F), \qquad 
\kappa = \frac{1}{2} \partial_{\rho}\lambda |_{\mathscr S} = 
\frac{L}{(L + E)(L + F)}.
\end{equation}
The extreme horizon is still characterized by $\kappa = 0 = L$ 
with the important proviso that we now have to assume $E = \sqrt{2} Q\neq 0$ and $F
= \sqrt{2}P \neq 0$
as the area 
\begin{equation}
A = 4\pi |E F|= 8\pi |P Q|.
\end{equation} 
would otherwise vanish. In fact, for $L = 0 = EB$ there is a lightlike singularity
at $\rho = L$ \cite{GH}, although the surface gravity $\kappa$ in (\ref{gib}) can be shown to remain 
finite in the extreme limit. This fact has stimulated some discussion, in particular regarding the 
thermodynamic interpretation \cite{HWIL,HH,GH}, which we recall in Sect.\,5.

\section{Area inequalities}

We first note that (\ref{emt}) satisfies the dominant energy condition,
whence any stable MOTS has spherical topology (Lem.\,(\ref{L1})). 
We continue with a Lemma which collects known results and will be key for what
follows. 

\newpage
\begin{lem}~
\label{L2}
\begin{enumerate}
\item
 For the Einstein-Maxwell-dilaton system (\ref{emd}) with 
$\left(\phi, A_{\alpha}, g_{\mu\nu} \right) \in H^3(\mcr M)$
the condition of stability of a MOTS (cf.\/\,Lem.\,(\ref{L1})) implies

\begin{equation}
\label{Wphipq}
 {\mathscr W(\phi,p,q)} =   \int_{\mathscr S} \left( \ab{D\phi}^2 + e^{-2c \phi} q^2   + e^{2c \phi} p^2
\right) dS  =
8\pi  \int_{\mathscr S} T_{\alpha\beta} k^{\alpha}\ell^\beta \le 4 \pi 
\end{equation}
where $D^{\alpha} = h^{\alpha \beta} \nabla_{\beta}$ is the intrinsic
derivative on ${\mathscr S}$.
\item Considered as a functional of $\phi$, a unique minimizer $\psi \in H^4(\mcr S)$ 
of ${\mathscr W}(\phi)$ always exists, satisfies the Euler-Lagrange
equations 
\begin{equation}
\label{psi}
\Delta \psi = c p^2 e^{2c\psi} - c q^2 e^{-2c\psi}
\end{equation}
and yields the estimate
\begin{equation}
\label{Wpsi} 
4 \pi \ge {\mathscr W}(\phi) \ge {\mathscr W}(\psi) = 
\int_{\mathscr S} \left[ (1 + c \psi) e^{-2c\psi} q^2 + (1 - c \psi) e^{2c\psi} p^2 \right] dS.
\end{equation} 

Moreover, if $\psi_{\pm} \in H^4(\mcr S)$ are sub- and supersolutions of (\ref{psi}),
i.e.
\begin{equation} 
\label{subsup}
\Delta \psi_- \ge c p^2 e^{2c\psi_-} - c q^2 e^{-2c\psi_-} \qquad \Delta
\psi_+ \le c p^2
e^{2c\psi_+} - c q^2 e^{-2c\psi_+}
\end{equation}
then $\psi_- \le \psi \le \psi_+$.

\end{enumerate}

\end{lem}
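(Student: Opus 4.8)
The plan is to treat \lref{L2}(2) as three interlocking claims: (i) a minimizer $\psi$ exists and is smooth enough to satisfy the Euler--Lagrange equation \eqref{psi}; (ii) the value ${\mathscr W}(\psi)$ can be rewritten in the closed form displayed in \eqref{Wpsi}; and (iii) the sub-/supersolution comparison $\psi_- \le \psi \le \psi_+$ holds. Since \eqref{Wphipq} has already been established in part (1), the bound $4\pi \ge {\mathscr W}(\phi) \ge {\mathscr W}(\psi)$ is immediate once $\psi$ is known to be the minimizer, so the real content is (i)--(iii). Throughout I would keep $p,q \in H^2(\mathscr S)$ fixed (as noted after \eqref{PQ}) and vary only $\phi$.

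For existence in (i), I would first observe that the functional
\begin{equation*}
{\mathscr W}(\phi) = \int_{\mathscr S} \left( \ab{D\phi}^2 + e^{-2c\phi}q^2 + e^{2c\phi}p^2 \right) dS
\end{equation*}
is smooth, strictly convex in $\phi$ (the gradient term is convex, and $e^{\pm 2c\phi}$ are convex, so the integrand is convex with strictly convex dependence on $D\phi$), and coercive on $H^1(\mathscr S)$: the Dirichlet term controls $\| D\phi\|_{L^2}$, while on the subset of $\mathscr S$ where $\phi$ is large positive the $e^{2c\phi}p^2$ term penalises it and where $\phi$ is large negative the $e^{-2c\phi}q^2$ term does, so the mean of $\phi$ is controlled provided $p$ and $q$ do not both vanish identically (the degenerate case $P=Q=0$ giving $\psi\equiv\text{const}$, ${\mathscr W}=0$, must be handled separately or excluded). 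Hence a unique minimizer $\psi \in H^1(\mathscr S)$ exists by the direct method; strict convexity gives uniqueness. The first variation yields the weak form of $\Delta\psi = cp^2 e^{2c\psi} - cq^2 e^{-2c\psi}$, and elliptic bootstrapping on the compact surface $\mathscr S$ — using $p,q \in H^2$ and Sobolev embedding in dimension 2 to control the exponential nonlinearities — promotes $\psi$ to $H^4(\mathscr S)$, which is what is claimed. The identity in \eqref{Wpsi} is then a direct computation: multiply \eqref{psi} by $\psi$, integrate over $\mathscr S$, integrate the Laplacian term by parts to get $-\int \ab{D\psi}^2 = \int (c\psi p^2 e^{2c\psi} - c\psi q^2 e^{-2c\psi})$, and substitute $\int\ab{D\psi}^2$ back into ${\mathscr W}(\psi)$.

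For the comparison principle (iii), the cleanest route is to exploit convexity directly rather than run a classical maximum-principle iteration. Writing $L(\phi) := \Delta\phi - cp^2 e^{2c\phi} + cq^2 e^{-2c\phi}$, the map $\phi \mapsto -L(\phi)$ is monotone in the sense that if $\phi_1 \le \phi_2$ then testing $L(\phi_1) - L(\phi_2)$ against $(\phi_1 - \phi_2)_+$ and using that $-cp^2e^{2c\phi} + cq^2e^{-2c\phi}$ is nonincreasing in $\phi$ forces $(\phi_1-\phi_2)_+ \equiv 0$. Concretely, for $\psi_+$ a supersolution one tests $L(\psi) - L(\psi_+) \ge 0$ against $w := (\psi - \psi_+)_+ \in H^1(\mathscr S)$, obtaining $\int_{\mathscr S} \left( -\ab{Dw}^2 + (\text{nonpositive terms on }\{\psi>\psi_+\}) \right) dS \ge 0$, which forces $w \equiv 0$; the lower bound with $\psi_-$ is symmetric. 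I expect step (i) — specifically pinning down coercivity cleanly and getting the $H^4$ regularity through the exponential nonlinearity with only $H^2$ data for $p,q$ — to be the main technical obstacle, whereas the closed-form identity and the comparison principle are short once convexity is in hand.
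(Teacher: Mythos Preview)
Your argument is correct, but it is organized differently from the paper. The paper does not run the direct method at all: for part (2) it simply cites the sub-/supersolution machinery in Taylor (Thm.\ 1.10 and Prop.\ 1.9 of \cite{MT}), so that existence of a solution $\psi$ of \eqref{psi} and the ordering $\psi_-\le\psi\le\psi_+$ come in one package from the monotone iteration scheme, while the closed form \eqref{Wpsi} is obtained, exactly as you do, by multiplying \eqref{psi} by $\psi$ and integrating by parts. In particular, the paper never proves coercivity or convexity of ${\mathscr W}$; instead, sub- and supersolutions are exhibited later, case by case, in the proofs of Theorem~\ref{T1} (the constants $\mu_\pm$) and Proposition~\ref{P1} (the shifts of the solution $v$ of a linear Poisson equation), and the abstract lemma is applied with those choices.

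The trade-off is this: your variational route gives an \emph{unconditional} existence statement for the minimizer (modulo the degenerate situations you flag), at the cost of the coercivity step you correctly identify as delicate in two dimensions with exponential nonlinearities and only $H^2$ coefficients. The paper's route avoids that analysis entirely but makes existence effectively \emph{conditional} on having sub- and supersolutions to hand, which is harmless because in every application such $\psi_\pm$ are constructed explicitly. Your comparison-principle argument via testing against $(\psi-\psi_+)_+$ is the standard proof underlying Prop.\ 1.9 of \cite{MT} and is fine; in the paper it is simply quoted. Either approach is acceptable, and nothing in your sketch is wrong, but be aware that the coercivity you would need fails when exactly one of $p,q$ vanishes identically (then \eqref{psi} has no solution, as integrating it over $\mathscr S$ shows), so your parenthetical about degenerate cases should be sharpened accordingly.
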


\begin{proof}[Proof.]~
\begin{enumerate}
 \item
This part follows by inserting (\ref{emt}) in Lemma \ref{L1} and 
treating the terms with the electromagnetic fields via Lemma 3.4.\,of \cite{DJR}. 

\item Existence of solutions of (\ref{psi}) and the bounds (\ref{subsup})
for $\psi$ follow from standard results
(Thm.\,1.10 and Prop.\,1.9 of \cite{MT}), while (\ref{Wpsi}) is obtained by partial integration.
\end{enumerate}

\end{proof}

Before proceeding with the general discussion of this equation we now discuss three special
cases. While the first two just use the first part of Lemma (\ref{L2})  and direct estimates
of (\ref{Wphi}),  the final one is non-trivial 
in the sense that it makes use of the second part of Lemma (\ref{L2}) as
well, in particular it uses an (albeit trivial) solution of (\ref{psi}).

\subsection{Special cases}

\subsubsection{Einstein-Maxwell}
In the  pure Einstein-Maxwell case ($\phi = 0$), the Cauchy-Schwarz estimates
 $\langle q^2 \rangle \ge 4\pi Q^2$ and $\langle p^2 \rangle \ge 4\pi P^2$
turn (\ref{Wphipq}) into  \cite{DJR} 
\begin{equation}
\label{nophi}
  16 \pi^2 \frac{Q^2 + P^2}{A} \le {\mathscr W}(\phi) \le 4\pi 
\end{equation}  
which gives (\ref{aem}); the bound is saturated for the extreme Reissner-Nordstr\"om solutions.

\subsubsection{Massless scalar field}
In the case $F_{\alpha\beta} = 0$ we obtain formally the same result 
\begin{equation}
\label{phi}
  16 \pi^2 \frac{Z^2}{A} \le {\mathscr W}(\phi) \le  4 \pi 
\end{equation}  
in terms of a `dilaton charge' 
\begin{equation}
\label{DC}
Z = \|D \phi\|_{L^1} = \frac{1}{4\pi} \int_{\mathscr S}  \ab{D \phi} dS.
\end{equation}
Note however that, in contrast to $Q$ and $P$ which are defined as integrals
over the {\it normal} components of the electric and magnetic fields,
$Z$ is defined from the {\it tangential} components $D\phi$.
Therefore $Z$  vanishes for all spherically symmetric solutions,
 no examples are known in which (\ref{phi}) is saturated,
while asymptotically flat static black holes do not exist anyway \cite{JC}.
Moreover, the integral is surface dependent, whence $Z$ will in general 
not  coincide with dilaton charges defined in the asymptotic region of 
asymptotically flat solutions. We anticipate that the quantity (\ref{DC})
will not appear in any of the estimates derived below.


\subsubsection{The case $p = \alpha q$ \label{paq}}
  
Here we assume that $p$ and $q$ are proportional with some
constant $\alpha \neq 0$.
As mentioned above, we now employ (\ref{psi}) which becomes   
\begin{equation}
\label{pap}
\Delta \psi = c  q^2 \left( \alpha^2 e^{2c\psi} -  e^{-2c\psi} \right).
\end{equation}
This is obviously solved by $e^{2c\psi} = |\alpha|^{-1}$,
and it is in fact the unique minimizer by virtue 
of Proposition 1.9 of \cite{MT} which is implicit in Lemma (\ref{L2}).
Inserting now in  (\ref{Wpsi}) and using as above the Cauchy-Schwarz estimate 
yields
\begin{equation}
\label{nDineq}
A \ge 8 \pi |P Q|.
\end{equation}
This bound is saturated for the extreme Gibbons solutions
(but not for extreme Reissner-Nordstr\"om), and consistent with Yazadjiev's 
inequality (\ref{SY}).

We note that uniqueness of  $e^{2c\psi} = |\alpha|^{-1}$, can also be seen directly 
rather than by using Lemma (\ref{L2}). Namely, integrating (\ref{pap}) over ${\mathscr S}$ 
implies that the expression in parenthesis, if nonzero, changes sign on ${\mathscr S}$. 
In particular it is positive at the maximum of $\psi$ and negative at
 a minimum of $\psi$. However, this contradicts the maximum principle and completes the argument.

\subsection{A sup-inf-estimate}

Here we assume that the normal components
$q$ and $p$ of the electric and magnetic fields do not vanish on ${\mathscr S}$, and we remark that this 
entails $Q\neq 0$ and $P \neq 0$. We define the quantities

\begin{equation}
\label{mupm}
e^{2c\mu_-} = \inf_{\mathscr S} \left|\frac{q}{p} \right|   \qquad
e^{2c\mu_+} = \sup_{\mathscr S} \left|\frac{q}{p} \right|   
\end{equation}
which play a key role in the following theorem.
Note that $\mu_- \le \mu_+$ since we assumed $c \ge 0$.  

 \begin{thm}
\label{T1}

 For the Einstein-Maxwell-dilaton system (\ref{emd}) with 
$\left(\phi, A_{\mu}, g_{\mu\nu} \right) \in H^3(\mcr M)$
we consider a smooth, stable MOTS $\mathscr S$ and assume that
$p \neq 0$ and $q \neq 0$ on ${\mathscr S}$.

Its area satisfies
\begin{equation}
\label{A1}
A \ge 4 \pi \left(N_q Q^2 + N_p P^2 \right)
\end{equation}
where 
\begin{eqnarray}
N_q & = & \min \left[\left(1 + c \mu_- \right) e^{-2 c \mu_-}, \left(1 + c
\mu_+  \right) e^{-2 c \mu_+} \right] \\
N_p & = & \min \left[\left(1 - c \mu_- \right) e^{2 c \mu_-}, \left(1 - c
\mu_+  \right) e^{2 c \mu_+} \right].
\end{eqnarray}
The bound (\ref{A1}) is saturated for the extreme Gibbons solutions (\ref{gib}).

\end{thm}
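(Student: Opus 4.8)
The plan is to exploit the second part of \lref{L2}, in particular the lower bound \eqref{Wpsi} for ${\mathscr W}(\psi)$, combined with a pointwise control of the minimizer $\psi$ coming from sub- and supersolutions built out of the constants $\mu_\pm$. First I would observe that the constant functions $\psi = \mu_-$ and $\psi = \mu_+$ are, respectively, a supersolution and a subsolution of the Euler-Lagrange equation \eqref{psi}: for a constant $\psi = \mu$ the left side $\Delta\psi$ vanishes, while the right side is $cp^2 e^{2c\psi} - cq^2 e^{-2c\psi} = c p^2 e^{-2c\psi}\!\left(e^{4c\psi}|p|^{-2}|q|^{-2}\cdot|q|^2 - (q/p)^2\right)$ — more cleanly, $cp^2 e^{2c\mu} - cq^2 e^{-2c\mu} = c e^{-2c\mu}\bigl(p^2 e^{4c\mu} - q^2\bigr) = c e^{-2c\mu} q^2\bigl(e^{4c\mu}(p/q)^2 - 1\bigr)$, which is $\le 0$ when $e^{2c\mu} \le e^{2c\mu_-} = \inf|q/p|$ (hence $e^{4c\mu}(p/q)^2 \le 1$ pointwise), i.e. when $\mu \le \mu_-$, and is $\ge 0$ when $\mu \ge \mu_+$. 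So $\psi_- := \mu_-$ is a subsolution and $\psi_+ := \mu_+$ is a supersolution (note the signs: the direction in \eqref{subsup} is that $\Delta\psi_-$ is \emph{bounded below} by the nonlinearity, which holds since that nonlinearity is $\le 0$ at $\mu_-$). By the comparison statement in \lref{L2}(2), the unique minimizer therefore satisfies $\mu_- \le \psi \le \mu_+$ pointwise on $\mathscr S$.

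Next I would feed this pointwise bound into the identity \eqref{Wpsi}, namely
\begin{equation}
\label{Wpsibd}
{\mathscr W}(\psi) = \int_{\mathscr S}\left[(1 + c\psi)e^{-2c\psi}q^2 + (1 - c\psi)e^{2c\psi}p^2\right]dS.
\end{equation}
The point is that the scalar function $f(t) = (1 + ct)e^{-2ct}$ and $h(t) = (1 - ct)e^{2ct}$ each attain their minimum over the interval $[\mu_-,\mu_+]$ at one of the endpoints. Indeed $f'(t) = c e^{-2ct} - 2c(1+ct)e^{-2ct} = -c e^{-2ct}(1 + 2ct)$, which changes sign only at $t = -1/(2c)$, so $f$ is unimodal on any interval and hence minimized at an endpoint; similarly for $h$. (Strictly I should check this endpoint-minimum claim on the specific interval $[\mu_-,\mu_+]$, but since the only critical point of $f$ is a maximum, $f$ restricted to any interval is minimized at an endpoint, and likewise $h$.) Therefore, since $\mu_- \le \psi \le \mu_+$ pointwise,
\begin{equation}
(1 + c\psi)e^{-2c\psi} \ge N_q, \qquad (1 - c\psi)e^{2c\psi} \ge N_p
\end{equation}
pointwise on $\mathscr S$, with $N_q, N_p$ exactly the quantities defined in the statement. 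Plugging into \eqref{Wpsibd} gives ${\mathscr W}(\psi) \ge N_q\langle q^2\rangle + N_p\langle p^2\rangle$, provided $N_q$ and $N_p$ are nonnegative so the inequality is preserved (in the regime where one of them is negative the corresponding term should just be dropped, or the bound is vacuous; I would remark on this). Then the Cauchy-Schwarz estimates $\int_{\mathscr S} q^2\,dS \ge (4\pi Q)^2/A = 16\pi^2 Q^2/A$ and likewise $\int_{\mathscr S} p^2\,dS \ge 16\pi^2 P^2/A$ — exactly as used in the Einstein-Maxwell case \eqref{nophi} — combine with the stability bound ${\mathscr W}(\phi) \ge {\mathscr W}(\psi)$ and ${\mathscr W}(\phi) \le 4\pi$ from \eqref{Wpsi} to yield
\begin{equation}
4\pi \ge {\mathscr W}(\psi) \ge \frac{16\pi^2}{A}\left(N_q Q^2 + N_p P^2\right),
\end{equation}
which rearranges to the claimed $A \ge 4\pi(N_q Q^2 + N_p P^2)$.

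The main obstacle I anticipate is not any single hard estimate but rather getting the inequality directions and sign conventions uniformly right: verifying carefully that $\mu_-$ is a subsolution and $\mu_+$ a supersolution in the precise sense of \eqref{subsup} (a sign slip here reverses the conclusion $\mu_- \le \psi \le \mu_+$), and confirming the unimodality of $f$ and $h$ so that the endpoint minima — and hence the $\min[\cdots]$ definitions of $N_q, N_p$ — are legitimate. A secondary point needing care is the regularity bookkeeping: $q, p \in H^2(\mathscr S)$ (as recorded in Sect.~2.2) and $p, q$ nonvanishing ensure $q/p$ is a well-defined bounded function so that $\mu_\pm$ are finite real numbers, and that $\psi \in H^4(\mathscr S)$ so \eqref{Wpsibd} and the comparison principle apply. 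Finally, to check saturation by the extreme Gibbons solutions \eqref{gib}, I would note that there $c = 1$ and $e^{2\phi} = (\rho + E)/(\rho + F)$ is constant on the horizon $\rho = L$, so $q/p$ is constant, $\mu_- = \mu_+$, and the bound collapses to $A \ge 8\pi|PQ|$ as in \eqref{nDineq}, which \eqref{gib} meets with equality; alternatively one simply remarks that \tref{T1} reduces to the case $p = \alpha q$ of Sect.~\ref{paq} when $q/p$ is constant, where saturation was already established.
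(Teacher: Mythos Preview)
Your proof is correct and follows the paper's strategy: constant sub- and supersolutions $\mu_\pm$ trap the minimizer $\psi$, then the integrand of \eqref{Wpsi} is bounded below via endpoint values, followed by Cauchy--Schwarz. The one genuine difference is in how the endpoint bound is obtained. The paper proves (Lemma~\ref{L3}) that the \emph{full} integrand $I(\psi)=(1+c\psi)e^{-2c\psi}q^2+(1-c\psi)e^{2c\psi}p^2$ has a single interior maximum in $\psi$; since the minimizing endpoint then depends on the point of $\mathscr S$ through $p,q$, the paper partitions $\mathscr S=\mathscr S_-\cup\mathscr S_+$ before taking global minima to reach $N_q,N_p$. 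You instead bound each factor $f(t)=(1+ct)e^{-2ct}$ and $h(t)=(1-ct)e^{2ct}$ separately, observe each is unimodal, and conclude the pointwise bounds $(1+c\psi)e^{-2c\psi}\ge N_q$ and $(1-c\psi)e^{2c\psi}\ge N_p$ directly. This is a mild simplification---it bypasses both Lemma~\ref{L3} and the partition---and lands on exactly the same estimate. Your flag about the sign of $N_q,N_p$ in the Cauchy--Schwarz step is well taken (the paper is silent on this), and your verification of saturation via reduction to the case $p=\alpha q$ is fine. The initial sentence swapping ``sub'' and ``super'' for $\mu_\mp$ should be cleaned up, but you correct it in the course of the computation.
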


The proof will be based on Lemma (\ref{L1}), in particular on Eq.\,(\ref{Wpsi}), for which we need the following
lemma.

\begin{lem}
\label{L3}
The integrand ${ I }(\psi)$ of (\ref{Wpsi}), considered as function of $\psi$, 
has precisely one critical point in the interval $(-\infty, \infty)$ 
which is a maximum.
\end{lem}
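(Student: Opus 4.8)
The plan is to analyze the single-variable function
\[
I(\psi) = (1 + c\psi)\,e^{-2c\psi}\,q^2 + (1 - c\psi)\,e^{2c\psi}\,p^2,
\]
where $q^2$ and $p^2$ are treated as fixed nonnegative constants (the values of the fields at a given point of $\mathscr S$), and to show its derivative has exactly one zero, at which $I$ has a local maximum. First I would compute
\[
I'(\psi) = c\,e^{-2c\psi}\bigl[\,1 - 2(1 + c\psi)\,\bigr] q^2 + c\,e^{2c\psi}\bigl[\,-1 + 2(1 - c\psi)\,\bigr] p^2
= -2c^2\psi\, e^{-2c\psi} q^2 - 2c^2\psi\, e^{2c\psi} p^2,
\]
and observe the remarkable simplification
\[
I'(\psi) = -2c^2\,\psi\,\bigl(e^{-2c\psi} q^2 + e^{2c\psi} p^2\bigr).
\]
Since the parenthesis is strictly positive for all $\psi$ (here we use that $q^2$ and $p^2$ are not both zero, which holds on $\mathscr S$ by the hypothesis $p\neq0$, $q\neq0$ of Theorem~\ref{T1}), the sign of $I'(\psi)$ is exactly the sign of $-\psi$. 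Hence $\psi = 0$ is the unique critical point, $I' > 0$ on $(-\infty,0)$ and $I' < 0$ on $(0,\infty)$, so it is a strict global maximum.

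The only mild subtlety worth a sentence is the degenerate situation where one of $q^2$, $p^2$ vanishes at the point in question: if both vanish then $I \equiv 0$ and the claim is vacuous, but in the context in which \lref{L3} is applied this does not occur, so the factor $e^{-2c\psi} q^2 + e^{2c\psi} p^2$ is genuinely positive and the argument goes through verbatim. I expect no real obstacle here: the key step is simply noticing the algebraic cancellation that collapses $I'$ into $-2c^2\psi$ times a manifestly positive factor, after which monotonicity and the location/nature of the critical point are immediate. The lemma will then feed into the proof of \tref{T1} by letting one bound $I(\psi)$ pointwise in terms of its values at the endpoints $\mu_-$ and $\mu_+$ of the relevant interval, using that a function with a single interior maximum attains its minimum on a closed interval at one of the two endpoints.
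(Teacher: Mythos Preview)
Your algebraic simplification of $I'(\psi)$ is incorrect. From your own intermediate step
\[
I'(\psi) = c\,e^{-2c\psi}\bigl[1 - 2(1 + c\psi)\bigr] q^2 + c\,e^{2c\psi}\bigl[-1 + 2(1 - c\psi)\bigr] p^2
= c\,e^{-2c\psi}(-1 - 2c\psi)\,q^2 + c\,e^{2c\psi}(1 - 2c\psi)\,p^2,
\]
the constants $\pm c$ do \emph{not} cancel: the correct result is
\[
I'(\psi) = c\bigl(p^2 e^{2c\psi} - q^2 e^{-2c\psi}\bigr) - 2c^2\psi\bigl(p^2 e^{2c\psi} + q^2 e^{-2c\psi}\bigr),
\]
not $-2c^2\psi(e^{-2c\psi}q^2 + e^{2c\psi}p^2)$. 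In particular, $\psi=0$ is a critical point only when $p^2=q^2$ at the given point of $\mathscr S$; in general the critical point lies at the $\psi$ solving $\tfrac{p^2}{q^2}e^{4c\psi}=\tfrac{1+2c\psi}{1-2c\psi}$, exactly as the paper records. So your ``remarkable simplification'' is a computational slip, and the monotonicity argument built on it collapses.

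The paper's route is to compute $I'$ and $I''$, observe that any critical point must lie in $(-\tfrac{1}{2c},\tfrac{1}{2c})$ (since the right-hand side of the critical-point equation must be positive), and check that $I''<0$ there; since every critical point is a strict local maximum and $I(\psi)\to -\infty$ at both ends, there is exactly one. Your overall strategy (analyze $I'$, locate the unique zero, conclude it is a max) is the same as the paper's, and your description of how the lemma feeds into \tref{T1} is correct; only the derivative computation needs to be fixed.
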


\begin{proof}[Proof.] 
 A straightforward calculation shows 
\begin {eqnarray}
{I}'(\psi) & = & (1 - 2c\psi) p^2 e^{2c\psi} -  (1 + 2c\psi) q^2 e^{- 2c\psi}
\\
{I}'' (\psi) & = &  - 4c^2 \psi \left( p^2 e^{2c\psi}  - q^2 e^{- 2c\psi} \right)
\end{eqnarray}
and it is easy to see that ${I}'' < 0$ at critical points which are located at
\begin{equation}
\frac{p^2}{q^2} e^{4 c \psi} = \frac{1 + 2c\psi}{1 - 2c\psi}.
\end{equation}
A closer inspection shows that there is in fact precisely one such maximum.
\end{proof}

\begin{proof}[Proof of Theorem (\ref{T1})] 
We first note that the constants $\mu_-$ and $\mu_+$ defined in (\ref{mupm}) can serve as sub- and
supersolutions in Lemma (\ref{L2}). In fact we have
\begin{equation}
\Delta \mu_{-} = 0 \ge  c p^2 e^{2c\mu_-} - c q^2 e^{-2c\mu_-} \qquad 
\Delta \mu_{+} = 0 \le  c p^2 e^{2c\mu_+} - c q^2 e^{-2c\mu_+}  .
\end{equation}
Hence by Lemma (\ref{L2}) there is a  minimizer $\psi \in H^4(\mathscr S)$ 
which satisfies
$\mu_- \le \psi \le \mu_+$.

To estimate (\ref{Wpsi}) from below, we use Lemma (\ref{L3}) which implies
that, for any point of ${\mathscr S}$, the integrand ${I}(\psi)$ in (\ref{Wpsi}) takes its minimum 
for one of the boundary values $\mu_-$ or $\mu_+$. We now replace ${I}(\psi)$ by the minimal boundary value. A subtlety here is that 
it will in general depend on the point on ${\mathscr S}$ at which of the 
values $\mu_-$ or $\mu_+$ the minimum is taken. We therefore define
a partition of ${\mathscr S}$ in terms of two 
subsets ${\mathscr S}_-$ and ${\mathscr S}_+$ with 
\begin{equation}
{\mathscr S}_- \cap {\mathscr S}_+ = \emptyset \qquad 
{\mathscr S}_- \cup {\mathscr S}_+ =  {\mathscr S}
\end{equation}
such that on ${\mathscr S}_{\pm}$, ${I}$ takes its minimum
at $\mu_{\pm}$. Note that the definition of  ${\mathscr S}_{\pm}$ is not unique 
if there is a subset ${\mathscr S}_0\subset \mathscr S$ for which the minima 
${I}(\mu_-)$ and ${I}(\mu_+)$ coincide - 
in this case  ${\mathscr S}_0$ can be distributed arbitrarily among ${\mathscr
S}_{\pm}$. Inserting in (\ref{Wphi}) then yields
\begin{eqnarray}
 4 \pi
 \ge  {\mathscr W}(\psi) & = &\int_{\mathscr S} \left[ (1 + c \psi) e^{-2c\psi} q^2 + (1 - c \psi) e^{2c\psi} p^2 \right] dS
 \\
& \ge &  \int_{{\mathscr S}_-} \left[ (1 + c \mu_-) e^{-2c\mu_-} q^2 + (1 - c
 \mu_-) e^{2c\mu_-} p^2 \right] dS
 + \\
& + & \int_{{\mathscr S}_+} \left[ (1 + c \mu_+) e^{-2c\mu_+} q^2 + (1 - c \mu_+) e^{2c\mu_+} p^2 \right] dS
  \\
& \ge & 4\pi \min \left[\left(1 + c \mu_- \right) e^{-2 c \mu_-}, \left(1 + c \mu_+  \right) e^{-2 c \mu_+} \right]
\; \langle q^2 \rangle 
 + \\
& + & 4\pi \min \left[\left(1 - c \mu_- \right) e^{2 c \mu_-}, \left(1 - c \mu_+  \right) e^{2 c \mu_+} \right] 
\; \langle p^2 \rangle 
 \\
& \ge & 
 \frac{16 \pi^2}{A} \left(N_q Q^2 + N_p P^2 \right)
\end{eqnarray} 

where here and henceforth we use the notation 
\begin{equation}
\label{av}
\langle f \rangle = \frac{1}{4\pi} \int_{\mathscr S} f dS \qquad
\end{equation}
for the average of a quantity $f\in L^2({\mathscr S})$,
and we have as before used Cauchy-Schwarz estimates in the final step.

\end{proof}

An easy exercise is to verify that when $\mu_-$ approaches $\mu_+$, 
(\ref{A1}) tends to (\ref{nDineq}). A more concrete statement will 
be given in Theorem (\ref{T2}) in connection with the subsequent
estimate.

\subsection{An $L^2$-estimate}


\begin{prop}\label{P1}
 For the Einstein-Maxwell-dilaton system (\ref{emd}) with 
$\left(\phi, A_{\mu}, g_{\mu\nu} \right)$\\ $\in H^3(\mathscr M)$
we consider a smooth, stable MOTS $\mathscr S$. 

 We denote by 
$\la_1 = \la_1(\mathscr S)$ the first non-zero eigenvalue of the Laplace-Beltrami operator on 
$\mathscr S$, and we use the notation $\hat f = f/\sqrt{\langle f^2 \rangle}$ for a quantity $f\in L^2(\mcr S)$. 

Then there exists a function $\varphi\in H^2(\mathscr S)$ with 
\eq{\label{eq-abs}
\Abi{\varphi}\leq C(\la_1,\inf_{x\in\mathscr S} R) \sqrt{\langle p^2\rangle\langle
q^2\rangle}\sqrt{\lng{(\hat p^2 - \hat q^2)^2}} 
} 
such that
\eq{\alg{\label{West}
\mcr W&\geq \sqrt{\langle p^2\rangle\langle q^2\rangle}\Bigg[\int_{\mathscr S} e^{-2c\varphi}\hat q^2+e^{2c\varphi}\hat p^2dS\\
&\quad+\frac14\ln(\lng {q^2} / \lng {p^2})\int_{\mathscr S}\left(e^{-2c\varphi}\hat q^2-e^{2c\varphi}\hat p^2\right)dS\\
&\quad+ c\int_{\mathscr S}\varphi\left(e^{-2c\varphi}\hat q^2-e^{2c\varphi}\hat p^2\right)dS\bigg]
}} 
holds, where $C = C(\la_1,\inf_{x\in\mathscr S} R)$ is a constant depending on the geometry
of $\mathscr S$.
Moreover, if $\hat q = \hat p$ this reduces to ${\mathscr W} \ge 32 \pi^2 |P Q| /A$
which is saturated for the Gibbons solutions (\ref{gib}).

\end{prop}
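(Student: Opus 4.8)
The plan is to take $\psi\in H^4(\mathscr S)$ to be the unique minimizer of Lemma~(\ref{L2}), to set $\mu_0:=\tfrac{1}{4c}\ln\bigl(\langle q^2\rangle/\langle p^2\rangle\bigr)$ and $\varphi:=\psi-\mu_0$, and to read off (\ref{West}) almost for free. Writing $K:=\sqrt{\langle p^2\rangle\langle q^2\rangle}$, the shift is chosen so that $e^{-2c\mu_0}\langle q^2\rangle=e^{2c\mu_0}\langle p^2\rangle=K$; substituting $\psi=\mu_0+\varphi$ into the Euler--Lagrange equation (\ref{psi}) gives
\begin{equation}\label{phiPDE}
\Delta\varphi=cK\bigl(\hat p^2\,e^{2c\varphi}-\hat q^2\,e^{-2c\varphi}\bigr),
\end{equation}
and substituting the same into the expression (\ref{Wpsi}) for $\mathscr W(\psi)$, then splitting $(1\pm c\psi)=(1\pm c\mu_0)\pm c\varphi$ and collecting the pieces (using $c\mu_0=\tfrac14\ln(\langle q^2\rangle/\langle p^2\rangle)$), reproduces exactly $K$ times the bracket on the right of (\ref{West}). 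Since $\mathscr W(\phi)\ge\mathscr W(\psi)$ by Lemma~(\ref{L2}) and $\varphi\in H^4(\mathscr S)\subset H^2(\mathscr S)$, inequality (\ref{West}) follows; the whole burden of the proof is therefore the bound (\ref{eq-abs}).

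To prove (\ref{eq-abs}) I would rewrite (\ref{phiPDE}) in Schr\"odinger form, using $\hat p^2e^{2c\varphi}-\hat q^2e^{-2c\varphi}=2c\varphi\,W+(\hat p^2-\hat q^2)$ with
\begin{equation}\label{Wdef}
W:=\hat p^2\,\frac{e^{2c\varphi}-1}{2c\varphi}+\hat q^2\,\frac{1-e^{-2c\varphi}}{2c\varphi}>0,
\end{equation}
so that $\Delta\varphi-2c^2KW\varphi=cK(\hat p^2-\hat q^2)$. Integrating over $\mathscr S$ and using $\int_{\mathscr S}\hat p^2\,dS=\int_{\mathscr S}\hat q^2\,dS=4\pi$ yields the orthogonality $\int_{\mathscr S}W\varphi\,dS=0$. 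Testing the equation against $\varphi$, using again $\int_{\mathscr S}(\hat q^2-\hat p^2)\,dS=0$ to replace $\varphi$ by $\varphi-\langle\varphi\rangle$ on the right-hand side, and then the Poincar\'e inequality $\|\varphi-\langle\varphi\rangle\|_{L^2}\le\lambda_1^{-1/2}\|D\varphi\|_{L^2}$, one gets
\begin{equation}\label{gradbd}
\|D\varphi\|_{L^2}^2+2c^2K\!\int_{\mathscr S}\!W\varphi^2\,dS\le cK\,\lambda_1^{-1/2}\,\|\hat p^2-\hat q^2\|_{L^2}\,\|D\varphi\|_{L^2},
\end{equation}
hence $\|D\varphi\|_{L^2}\le cK\lambda_1^{-1/2}\|\hat p^2-\hat q^2\|_{L^2}$, which is already of the shape (\ref{eq-abs}) with the $L^2$-norm in place of the sup-norm. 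To pass to $\|\varphi\|_\infty$ one controls the mean $\langle\varphi\rangle$ from the orthogonality $\int_{\mathscr S}W\varphi\,dS=0$ together with an a priori $L^\infty$-bound on $\varphi$ (which exists since $H^4(\mathscr S)\hookrightarrow C^0(\mathscr S)$, or follows from the sub/supersolutions $\mu_\pm$ of (\ref{mupm}) when $pq\neq0$), and then feeds the resulting $L^2(\mathscr S)$-bounds for $\varphi$ and $\Delta\varphi$ into $H^2$ elliptic regularity on $\mathscr S$ followed by the Sobolev embedding $H^2(\mathscr S)\hookrightarrow L^\infty(\mathscr S)$, valid because $\dim\mathscr S=2$. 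Keeping the constants in these last steps expressed through $\lambda_1$ and $\inf_{\mathscr S}R$ only --- rather than through pointwise ratios of $p$ and $q$ --- is the main technical obstacle.

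Finally, the degenerate case: if $\hat q=\hat p$ then $\hat p^2-\hat q^2\equiv0$, so (\ref{eq-abs}) forces $\varphi\equiv0$, and also $\int_{\mathscr S}(e^{-2c\varphi}\hat q^2-e^{2c\varphi}\hat p^2)\,dS=\int_{\mathscr S}(\hat q^2-\hat p^2)\,dS=0$, so the last two lines of the bracket in (\ref{West}) vanish and (\ref{West}) collapses to
\begin{equation}\label{collapse}
\mathscr W\ge K\int_{\mathscr S}(\hat q^2+\hat p^2)\,dS=8\pi K=8\pi\sqrt{\langle p^2\rangle\langle q^2\rangle}.
\end{equation}
The Cauchy--Schwarz estimates $(4\pi Q)^2\le A\cdot4\pi\langle q^2\rangle$ and $(4\pi P)^2\le A\cdot4\pi\langle p^2\rangle$ give $\sqrt{\langle p^2\rangle\langle q^2\rangle}\ge4\pi|PQ|/A$, whence $\mathscr W\ge32\pi^2|PQ|/A$; combined with $\mathscr W\le4\pi$ from Lemma~(\ref{L2}) this is again $A\ge8\pi|PQ|$, with equality for the extreme Gibbons solutions exactly as in Section~\ref{paq}.
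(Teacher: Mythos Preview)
Your derivation of (\ref{West}) and of the degenerate case $\hat q=\hat p$ is correct and is in fact identical to the paper's: you shift the minimizer $\psi$ by $\mu_0=\omega=\tfrac{1}{4c}\ln(\langle q^2\rangle/\langle p^2\rangle)$, substitute into (\ref{Wpsi}), and collect terms. (The paper has a sign slip in defining $\varphi=\omega-\psi$ versus resubstituting $\psi=\varphi+\omega$; your convention $\varphi=\psi-\mu_0$ is the consistent one.)

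The gap is exactly the point you flag as ``the main technical obstacle'': your route to (\ref{eq-abs}) does not close with constants depending only on $\lambda_1$ and $\inf_{\mathscr S}R$. From (\ref{gradbd}) you control $\|D\varphi\|_{L^2}$ and $\int_{\mathscr S}W\varphi^2$, but to pass to $\|\varphi\|_\infty$ you need $\|\varphi\|_{L^2}$ and $\|\Delta\varphi\|_{L^2}$. For the mean you propose using $\int W\varphi=0$, which gives $|\langle\varphi\rangle|\le\|W\|_{L^2}\|\varphi-\langle\varphi\rangle\|_{L^2}/\!\int W$; but both $\|W\|_{L^2}$ and $\int W$ depend on $\varphi$ through the exponentials in (\ref{Wdef}), so this is circular. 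Invoking the constant barriers $\mu_\pm$ of (\ref{mupm}) instead reintroduces $\sup_{\mathscr S}|q/p|$ and $\inf_{\mathscr S}|q/p|$, which is precisely what (\ref{eq-abs}) is meant to avoid. Similarly, $\|\Delta\varphi\|_{L^2}\le 2c^2K\|W\varphi\|_{L^2}+cK\|\hat p^2-\hat q^2\|_{L^2}$ requires $\|W\|_{L^\infty}$, again circular.

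The paper sidesteps this entirely by a linear auxiliary problem (the Choquet-Bruhat--Moncrief trick \cite{CM01}): solve $\Delta v=e^{2c\omega}p^2-e^{-2c\omega}q^2=K(\hat p^2-\hat q^2)$, whose right-hand side has zero mean, so a unique zero-mean solution $v\in H^4(\mathscr S)$ exists. Then $\psi_\pm:=v-\min v+\omega$ and $v-\max v+\omega$ are super/subsolutions of (\ref{psi}), so $\psi_-\le\psi\le\psi_+$ and hence $\|\varphi\|_\infty=\|\psi-\omega\|_\infty\le 2\|v\|_\infty$. Now $v$ solves a \emph{linear} Poisson equation with zero-mean source, and $\|v\|_\infty$ is estimated by Poincar\'e, the Bochner/Ricci identity $\|\nabla^2 v\|_{L^2}^2=\|\Delta v\|_{L^2}^2-\tfrac12\int R|\nabla v|^2$, and the Sobolev embedding $H^2(\mathscr S)\hookrightarrow L^\infty(\mathscr S)$ --- all with constants depending only on $\lambda_1$ and $\inf_{\mathscr S}R$. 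The nonlinearity never enters the estimate.
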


A straightforward consequence of Proposition (\ref{P1}) is the following `perturbation'
of inequality (\ref{nDineq}). 

\begin{thm}\label{T2}
Under the assumptions of Proposition (\ref{P1}), for all $\varepsilon>0$ there exists a $\delta>0$ such that 
for $\langle (\hat p^2 - \hat q^2)^2\rangle <\delta$, 
\begin{equation}
\label{aepq}
 A\geq 8\pi  (1-\varepsilon) |P  Q|.
\end{equation}

\end{thm}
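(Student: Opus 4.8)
The plan is to deduce Theorem \ref{T2} directly from Proposition \ref{P1} by a compactness/continuity argument in the bracketed functional. First I would note that by Proposition \ref{P1} there is a function $\varphi\in H^2(\mathscr S)$ with
\begin{equation*}
\Abi{\varphi}\leq C(\la_1,\inf_{\mathscr S} R)\sqrt{\langle p^2\rangle\langle q^2\rangle}\sqrt{\langle(\hat p^2-\hat q^2)^2\rangle},
\end{equation*}
so that as $\langle(\hat p^2-\hat q^2)^2\rangle\to 0$ we get $\Abi{\varphi}\to 0$ uniformly; here I would also use that $\langle(\hat p^2-\hat q^2)^2\rangle$ controls the size of $\frac14\ln(\langle q^2\rangle/\langle p^2\rangle)$, since $\hat p$ and $\hat q$ being $L^2$-close forces $\langle q^2\rangle/\langle p^2\rangle$ close to $1$ (expand $\langle(\hat p^2-\hat q^2)^2\rangle=2-2\langle\hat p^2\hat q^2\rangle$ and compare with Cauchy--Schwarz). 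Thus all three `correction' terms in the bracket of \eqref{West} are small, while the leading term $\int_{\mathscr S}(e^{-2c\varphi}\hat q^2+e^{2c\varphi}\hat p^2)\,dS$ tends to $\int_{\mathscr S}(\hat q^2+\hat p^2)\,dS=8\pi$ as $\varphi\to 0$.

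Next I would make these estimates quantitative. Writing $e^{\pm 2c\varphi}=1+O(\Abi{\varphi})$ uniformly on $\mathscr S$ (with implied constant depending only on $c$ once $\Abi{\varphi}$ is bounded), the leading term is $8\pi + O(\Abi{\varphi})$, the $\ln$-term is $O(|\ln(\langle q^2\rangle/\langle p^2\rangle)|)=O(\langle(\hat p^2-\hat q^2)^2\rangle^{1/2})$, and the $\varphi$-term is $O(\Abi{\varphi})$. Combining with the prefactor and the Cauchy--Schwarz bounds $\langle q^2\rangle\ge 4\pi Q^2/A\cdot A$, more precisely $\langle q^2\rangle\ge 4\pi Q^2$ and $\langle p^2\rangle\ge 4\pi P^2$ — wait, one must be careful: the prefactor is $\sqrt{\langle p^2\rangle\langle q^2\rangle}$ and $\mcr W\le 4\pi$, so I would divide through. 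Concretely, from $4\pi\ge\mcr W\ge\sqrt{\langle p^2\rangle\langle q^2\rangle}\,(8\pi - \eta(\delta))$ where $\eta(\delta)\to 0$ as $\delta\to0$, I get $\sqrt{\langle p^2\rangle\langle q^2\rangle}\le 4\pi/(8\pi-\eta(\delta))$. Then Cauchy--Schwarz gives $\langle q^2\rangle\ge 16\pi^2 Q^2/A$ and $\langle p^2\rangle\ge 16\pi^2 P^2/A$ (using $\langle q\rangle = 4\pi Q/(4\pi)\cdot$, i.e. $Q=\langle q\rangle$ and $\langle q^2\rangle\ge\langle q\rangle^2 = Q^2$ — one has to track the normalization in \eqref{av}, where $\langle q\rangle$ equals $Q$ up to the $A$-dependent volume, so actually $\langle q^2\rangle\ge (4\pi/A)^2\cdot\ldots$; I would just reuse the exact Cauchy--Schwarz step already performed in the proof of Theorem \ref{T1}, namely $\langle q^2\rangle\ge 16\pi^2 Q^2/A$). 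Multiplying, $\langle p^2\rangle\langle q^2\rangle\ge (16\pi^2)^2 P^2Q^2/A^2$, hence $\sqrt{\langle p^2\rangle\langle q^2\rangle}\ge 16\pi^2|PQ|/A$, and feeding this into $4\pi\ge\sqrt{\langle p^2\rangle\langle q^2\rangle}(8\pi-\eta(\delta))$ yields $A\ge 4\pi|PQ|(8\pi-\eta(\delta))/(4\pi) = |PQ|(8\pi-\eta(\delta))$.

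Finally I would close the $\varepsilon$--$\delta$ loop: given $\varepsilon>0$, since $\eta(\delta)\to 0$ as $\delta\to 0$ (this uses only that $C(\la_1,\inf_{\mathscr S}R)$ and $c$ are fixed, and that $\sqrt{\langle p^2\rangle\langle q^2\rangle}$ is already bounded above once $\delta$ is small, so the bound on $\Abi{\varphi}$ is itself $O(\delta^{1/2})$), choose $\delta$ so small that $\eta(\delta)\le 8\pi\varepsilon$; then $A\ge 8\pi(1-\varepsilon)|PQ|$, which is \eqref{aepq}.

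I expect the main obstacle to be the self-referential bound on $\Abi{\varphi}$: the estimate \eqref{eq-abs} for $\varphi$ contains the factor $\sqrt{\langle p^2\rangle\langle q^2\rangle}$, which is not a priori bounded until one has already used \eqref{West} to bound it; so the argument must be set up as a mild bootstrap — first observe $\mcr W\le 4\pi$ together with the nonnegativity of the leading term gives a crude a priori bound on $\sqrt{\langle p^2\rangle\langle q^2\rangle}$ (up to the correction terms, which for $\Abi{\varphi}$ not yet controlled is circular), or alternatively bound $\sqrt{\langle p^2\rangle\langle q^2\rangle}$ first by a direct argument. The cleanest fix is to note that $\mcr W\le4\pi$ and $\int_{\mathscr S} e^{-2c\varphi}\hat q^2 + e^{2c\varphi}\hat p^2\,dS\ge 2\int_{\mathscr S}|\hat p\hat q|\,dS\ge$ a positive constant times $8\pi$ minus something controlled by $\delta$, combined with the other two terms being sign-indefinite but small, to get $\sqrt{\langle p^2\rangle\langle q^2\rangle}\le C'$ for $\delta$ below an absolute threshold; once that crude bound is in hand, \eqref{eq-abs} gives $\Abi{\varphi}=O(\delta^{1/2})$ and the rest is routine. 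Tracking the normalization conventions in \eqref{av} and \eqref{PQ} carefully throughout is the only other place where care is needed.
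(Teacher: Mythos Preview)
Your overall strategy---show that the bracket in \eqref{West} tends to $8\pi$ as $\delta\to 0$, then combine $\mcr W\le 4\pi$ with the Cauchy--Schwarz estimate $\sqrt{\langle p^2\rangle\langle q^2\rangle}\ge 4\pi|PQ|/A$---is precisely the one the paper intends; the paper records Theorem~\ref{T2} only as a ``straightforward consequence'' of Proposition~\ref{P1} and gives no further details.

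There is, however, a genuine error in how you treat the second line of \eqref{West}. Your claim that smallness of $\langle(\hat p^2-\hat q^2)^2\rangle$ forces $\ln(\langle q^2\rangle/\langle p^2\rangle)$ to be small is false: $\hat p$ and $\hat q$ are by definition \emph{normalized} to $\langle\hat p^2\rangle=\langle\hat q^2\rangle=1$, so their $L^2$-closeness carries no information whatsoever about the ratio of the unnormalized averages. Concretely, if $p$ and $q$ are distinct nonzero constants on $\mathscr S$ then $\hat p^2\equiv\hat q^2$ identically, hence $\langle(\hat p^2-\hat q^2)^2\rangle=0$, yet $\langle q^2\rangle/\langle p^2\rangle=q^2/p^2$ is arbitrary. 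So the coefficient $\tfrac14\ln(\langle q^2\rangle/\langle p^2\rangle)=c\omega$ cannot be controlled by $\delta$.

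The correct way to dispose of that line is not to bound the coefficient but to observe that the integral it multiplies vanishes identically. Integrating the Euler--Lagrange equation \eqref{psi} over the closed surface $\mathscr S$ gives
\[
\int_{\mathscr S}\bigl(p^2e^{2c\psi}-q^2e^{-2c\psi}\bigr)\,dS=0,
\]
and after substituting $\psi=\omega+\varphi$ together with $e^{4c\omega}=\langle q^2\rangle/\langle p^2\rangle$ this becomes exactly
\[
\int_{\mathscr S}\bigl(e^{-2c\varphi}\hat q^2-e^{2c\varphi}\hat p^2\bigr)\,dS=0.
\]
Thus the second line of \eqref{West} contributes nothing, regardless of how large $\omega$ may be. This also eases the bootstrap you correctly flag: for a crude a~priori bound on $\sqrt{\langle p^2\rangle\langle q^2\rangle}$ one may bypass \eqref{West} and use the original (pre-integration-by-parts) form $\mcr W(\psi)\ge\int_{\mathscr S}(e^{-2c\psi}q^2+e^{2c\psi}p^2)\,dS\ge 2\int_{\mathscr S}|pq|\,dS=8\pi\sqrt{\langle p^2\rangle\langle q^2\rangle}\,\langle|\hat p\hat q|\rangle$, together with $(|\hat p|-|\hat q|)^2\le|\hat p^2-\hat q^2|$ to bound $\langle|\hat p\hat q|\rangle$ below for small $\delta$. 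With $\sqrt{\langle p^2\rangle\langle q^2\rangle}$ bounded, \eqref{eq-abs} gives $\|\varphi\|_\infty=O(\sqrt\delta)$, and the remaining two lines of \eqref{West} are then handled exactly as you describe.
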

Note that in contrast to Theorem (\ref{T1}), $q$ and $p$ are allowed to have
zeros on ${\mathscr S}$ in Proposition (\ref{P1}) and Theorem (\ref{T2}).
For $\delta = 0$, we have $p = \alpha q$ and the result reduces to (\ref{paq}).

\begin{proof}[Proof of Proposition (\ref{P1})]
We set out from Lemma (\ref{L1}) but instead of taking constant sub- and supersolutions
as in Theorem (\ref{T1}), we follow the method used by Choquet-Bruhat and Moncrief in another context \cite{CM01}.
Consider the equation
\eq{
0=e^{2c\omega}\lng{p^2}-e^{-2c\omega}\lng{q^2},
}
for a real number $\omega$, yielding
\eq{
\lng{q^2}\lng{p^2}^{-1}=e^{4c\omega}.
}
Then the linear equation
\eq{\label{lin}
\De v=e^{2c\om}p^2-e^{-2c\om}q^2=f(\omega)
}
has a unique solution $v\in H^4({\mathscr S})$ since $\int_{\mathscr S}f(\omega)=0$. 
Sub- and supersolutions are now defined in terms of $v$ by
\begin{equation}
\label{psipm}
\psi_+=v-\min v+\om\qquad\mbox{and}\qquad \psi_-=v-\max v+\om
\end{equation}
which implies (\ref{subsup}). By Lemma (\ref{L1}), there exists a unique solution $\psi\in H^4(\mcr S)$ to (\ref{psi}), with 
\eq{
\mcr W(\zeta)\geq\mcr W(\psi)\qquad\forall \zeta\in H^{1}(\mcr S)
}
and 
\eq{\label{psibd}  \psi_-\leq\psi\leq\psi_+.
}
For $\varphi$ defined by
\eq{\varphi = \omega - \psi}
(\ref{psipm}) and (\ref{psibd})
imply the straightforward pointwise estimate
\eq{
\Ab{\varphi}_{L^{\infty}(\mathscr S)}\leq 2\Ab{v}_{L^{\infty}(\mathscr S)}.
}
To estimate the minimizer $\psi$  we will now estimate $v$ in terms of
geometric quantities.\\
As $v$ is a solution of (\ref{lin}), partial integration yields 
\eq{
\int\ab{\na v}^2dS=-\int v f(\om).
}
which in turn gives 
\eq{\label{es-1}
\Ab{\na v}_{L^2}^2\leq\Ab{f(\om)}_{L^2}\Ab{v}_{L^2}.
}
The Poincar\'e inequality for $v$ - as a function with mean value zero -
reads
\eq{\label{poin}
\Ab{v}_{L^2}^2  \leq\frac1{\la_1(g)}\Ab{\na v}_{L^2}^2.
}
With (\ref{es-1}) this now implies

\eq{\label{es-2}
\Ab{\na v}_{L^2}\leq\frac1{\sqrt{\la_1(g)}}\Ab{f(\om)}_{L^2}.
}
Finally, the Ricci identity
\eq{
\Ab{\na^2 z}_{L^2}^2=\Ab{\De z}_{L^2}^2-\frac12\int R\ab{\na z}^2, \qquad
\forall z\in H^2(S)
}
gives
\eq{\label{es-3}
\Ab{\na^2v}_{L^2}^2\leq\Ab{f(\om)}_{L^2}^2+\frac{\inf R}2\int \ab{\na
v}^2dS.
}
Combining (\ref{poin}), (\ref{es-2}) and (\ref{es-3}) yields the estimate
for $v$
\eq{\label{es-4}
\Ab{v}_{H^2}\leq \sqrt{1+\frac1{\la_1(g)^2}(1 + \la_1(g)) +\frac{\inf
R}{2\la_1(g)}}\Ab{f(\om)}_{L^2}.
}
We note that if the scalar curvature $R$ is non-negative on ${\mathscr S}$, the
term
with $\inf R$ can be dropped from  (\ref{es-3}) and hence from (\ref{es-4}).
Now Sobolev embedding with optimal constant $C(g)$ yields
\eq{
\Abi{v}\leq C(g) \sqrt{1+\frac1{\la_1(g)^2}(1 + \la_1(g)) +\frac{\inf
R}{2\la_1(g)}}\Ab{f(\om)}_{L^2},
}
which is (\ref{eq-abs}). Next, $f(\om)$ can be estimated as follows
\eq{\alg{
\Ab{f(\om)}^2_{L^2} &= 4c^2\int_{\mathscr S}\left(e^{2c\om}p^2-e^{-2c\om}q^2\right)^2dS\\
&=4c^2\lng{q^2}\lng{p^2}\lng{\left(\hat p^2-\hat q^2\right)^2}.
}}
Eventually, using (\ref{Wpsi})  and resubstituting $\psi=\varphi+\om$ yields
\eq{\alg{
\mcr W&\geq \int -2c(\omega+\varphi)(p^2e^{2c(\om+\varphi)}-q^2e^{-2c(\om+\varphi)})\\
&\qquad+p^2e^{2c(\om+\varphi)}+q^2e^{-2c(\om+\varphi)}dS\\
&=\sqrt{\lng{p^2}\lng{q^2}}\Bigg[\int  e^{-2c\varphi}\hat q^2 +e^{2c\varphi}\hp^2dS\\
&\qquad+\frac12\ln(\lng q^2/\lng p^2)\int(\hat q^2e^{-2c\varphi}-\hat p^2e^{2c\varphi})dS\\
&\qquad+2c\int \varphi(\hat q^2e^{-2c\varphi}-\hat p^2e^{2c\varphi})dS\bigg].
}}
which was claimed in (\ref{West})
\end{proof}

We recall that, compared to Yazadjiev's result (\ref{SY}), our theorem
does not require axial symmetry. In the axially symmetric case,
however, our estimate (\ref{aepq}) gives complementary infomation whose 
relevance clearly depends on the relative magnitude of the quantities $|P Q|$, $J^2$ and  
$\langle (\hat p^2 - \hat q^2)^2 \rangle$.

\section{The cosmological case}

A class of Lagrangians with cosmological constant 
with arbitrary dilaton coupling constant $d$, namely

\begin{equation}
\label{emdL}
\mcr L =  \left( R - 2 g^{\alpha\beta} \phi_{\alpha}\phi_{\beta} - e^{2 c \phi}
F_{\alpha\beta}F^{\alpha\beta} - \Lambda e^{-d \phi} \right)
\end{equation}
has been discussed recently (cf e.g.\,\cite{CLSZ}). It seems interesting to 
 generalize  the results for stable MOTS obtained above to this case, but 
this lies beyond the scope of the present paper. We rather restrict ourselves here
to the case $d = 0$ for which the generalization is straightforward. 
As  Einstein's equations now read
\eq{
G_{\alpha\beta} +\La g_{\alpha\beta} = 8\pi T_{\alpha\beta}
}
 the stability condition (\ref{Wphi}) has to be replaced by  
\begin{equation}
\label{WphiL}
 {\mathscr W(\phi)} =   \int_{\mathscr S} \left( \ab{D\phi}^2 + e^{-2c \phi} q^2   + e^{2c \phi} p^2
\right) dS  =
  \int_{\mathscr S} \left( 8\pi T_{\alpha\beta} - \Lambda g_{\alpha\beta} \right)k^{\alpha}\ell^\beta \le
 4\pi  + \Lambda A.  
\end{equation}

This leads to the following quadratic inequality for $A$ as 
generalization of Theorem (\ref{T1}) 

\begin{eqnarray}
\Lambda A^2 - 4\pi (1 - g) A + 16\pi^2 \left(N_q Q^2 + N_p P^2 \right) \le 0,
\end{eqnarray}
where $g$ is the genus of ${\mathscr S}$. This inequality can be discussed
along the lines of \cite{WS}; for $\Lambda > 0$ there  arise an upper and a lower bound for $A$, as well as the upper
bound  $ N_q Q^2 + N_p P^2 \le 1/(4\Lambda)$. On the other hand, for
$\Lambda < 0$, there is just a lower bound for the area.   

In a similar manner Theorem (\ref{T2}) now reads that 
for all $\varepsilon>0$ there exists a $\delta>0$ such that 
for $\langle (\hat p^2 - \hat q^2)^2\rangle < \delta,$ 

\begin{equation}
\Lambda A^2 - 4\pi (1 - g) A + 32\pi^2 (1 - \varepsilon) |P Q| \le 0
\end{equation}
with area bounds as before; for $\Lambda > 0$ the upper bound on the charges
now reads  $(1 - \varepsilon)|P Q| \le 1/(2 \Lambda)$.



\section{Black holes versus elementary particles}

As indicated in the introduction and in Sect.\,2.3.2., 
static EMD black holes have interesting thermodynamic properties in the extreme limit.  
We first recall the  cases in which solutions are known explicitly.
\subsubsection{Arbitrary $c$ but either electric or magnetic field vanishing
\label{gm}}
In this case there is the  2-parameter family of solutions found in
\cite{GM} which contains an extreme 1-parameter
subfamily. The area of the horizons goes to zero in the extreme
limit and for all $c \neq 0$. More precisely, this family approaches a
singular solution where the singularity was found to be lightlike
\cite{GH}. As to the surface gravity $\kappa$, 
 the value $c = 1$ is critical in the sense that for $c < 1$, $\kappa$ goes
 to zero in the extreme limit, for $c=1$ it approaches a constant, while for 
$c > 1$ it diverges.

\subsubsection{Coupling $c=0$ or $c=1$, and arbitrary Maxwell field}
Here there are the 3-parameter families of Reissner-Nordstr\"om and 
Gibbons solutions with 2-parameter extreme subfamilies, which have been reviewed in Sect.\,2.3.2.
For the Gibbons solutions ($c=1$) we have seen that the
area satisfies $A = 8\pi |P Q|$; in particular, it stays positive in the
extreme limit provided that neither charge vanishes.   \\
\\
Of course the thermodynamic interpretation of the extreme limit is subtle 
when the zero area $\widehat{=}$ zero entropy $\widehat{=}$  singular solution with
non-zero surface gravity $\widehat{=}$ temperature is approached. 
Based on perturbation analyses with axially symmetric, time dependent
perturbations and with a spherically symmetric, time dependent test field,
 Holzhey and Wilczek \cite{HWIL} have argued that for $c = 1$ the static extremal solutions
(\ref{gm}) enjoy a finite `mass gap' in the sense  that these objects are repulsive for 
low-energy perturbations, while for $ c > 1$ the mass gap is infinite and the
object is universally repulsive. Accordingly, these authors put forward an analogy
between these configurations and elementary particles.

In the static, spherically symmetric examples discussed above,  
the 'particle-like' behaviour is associated with singular solutions having 
'zero horizon area' and therefore zero entropy. 
If this connection persists in non-symmetric, dynamical
spacetimes  (in the dynamical case, the proportionality between entropy and area has been 
under dispute \cite{CS,AN}), Yazadjiev's \cite{SY} and our theorems provide trivial
criteria to judge potential 'particle candidates': The former results, which
apply to stable axially symmetric MOTS, single out configurations with $|PQ| = |J|$
as good candidates since  $A = 0$,  
while they exclude those with  $|PQ| \neq |J|$ for which $A > 0$. 
On the other hand, in the generic case our theorems (\ref{T1}) and (\ref{T2})
exclude several  configurations with stable MOTS, again by guaranteeing
that $A > 0$. In particular, theorem (\ref{T2}) excludes solutions for which  
$P.Q \neq 0$ and for which  $q$ and $p$ 
are either proportional or close to being proportional.



\begin{thebibliography}{99}


\bibitem{DJR} S. Dain, J. L. Jaramillo, and M. Reiris,  {\it Class. Quantum Grav.} {\bf 29} 035013 (2012).
\bibitem{SY} S. Yazadjiev,  {\it Phys. Rev. D}~{\bf 87} 024016 (2013).   
\bibitem{GA} G. Gibbons, {\it Nucl. Phys. B}~ {\bf 207} 337 (1982).
\bibitem{RN} R. Newman, {\it Class. Quantum Grav.} {\bf 4} 277 (1987).
\bibitem{AMS2} L. Andersson, M. Mars  and W. Simon {\it Adv. Theor. Math.
Phys.} {\bf 12} 853 (2008).
\bibitem{AEM} L. Andersson, M. Eichmair and J. Metzger,  {\it in} M. Agranovsky
 et al. (eds) {\it Proceedings of the Fourth International
Conference on Complex Analysis and Dynamical Systems (Contemporary Mathematics 553)}  
(Providence: American Mathematical Society and Ramat-Gan: Bar-Ilan
University; 2009).
\bibitem{JJ1} J. L. Jaramillo, {\it Int. J. of Modern Physics}~{\bf 20} 2169
(2011).
\bibitem{GG} G. Galloway, {\it in} G.T.Horowitz (ed) {\it Black Holes in 
Higher Dimensions} (Cambridge: Cambridge University Press 2012). 
\bibitem{AMMS} L. Andersson, M. Mars, J. Metzger and W. Simon, {\it Class. Quantum
Grav.}~{\bf 26} 085018 (2008).
\bibitem{IC} I. P. Costa e Silva, {\it Class. Quantum Grav.}~{\bf 29} 235008 (2012).
\bibitem{SD1} S. Dain,  {\it Class. Quantum Grav.}~{\bf 29} 073001 (2012).
\bibitem{JJ2} J. L. Jaramillo, {\it Springer Proceedings in Mathematics \& Statistics} {\bf 26} 139 (2013). 
\bibitem{WS} W.  Simon, {\it Class. Quantum Grav.}~{\bf 29} 062001 (2012).
\bibitem{GJR} M. E. Gabach Clement, J.L. Jaramillo  and M. Reiris, {\it Class. Quantum Grav.} {\bf 30} 065017 (2013).
\bibitem{MM} M. Mars,  Class. Quantum Grav. {\bf 26}, 193001 (2009). 
\bibitem{BC} R. Bartnik and P. Chru\'sciel, {\it J. Reine Angew. Math.}~{\bf 579}, 13 (2005).
\bibitem{SD2} S. Dain, {\it Class. Quantum Grav.}~{\bf 29}, 073001 (2012).
\bibitem{CC} P. Chrusciel and J L Costa, {\it Class. Quant. Grav.}~{\bf 26} 235013 (2009). 
\bibitem{SZ} R. Schoen, X. Zhou, {\it Ann. Henri Poincar\'e}~{\bf 14}, 1747
(2013).
\bibitem{DKWY} S. Dain, M. Khuri, G. Weinstein and S. Yamada, {\it Phys.
Rev.}~{\bf D 88}, 024048 (2013).
\bibitem{HCA} J. Hennig, C. Cederbaum and M. Ansorg, {\it Commun. Math. Phys.}~{\bf 293}, 449 (2010). 
\bibitem{CS} A. Corichi and D. Sudarsky, {\it Mod. Phys. Lett. A}~ {\bf 17}, 1431 (2002).
\bibitem{AN} A. Nielsen, {\it Int. J. Mod. Phys. D}~{\bf 17}, 2359 (2009).
\bibitem{HWIL} C. Holzhey and F. Wilczek, {\it Nucl. Phys.}~{\bf B 380} 447 (1992).  
\bibitem{HH} J. H. Horne and G. T. Horowitz, {\it Phys Rev.}~{\bf D 46}, 1340 (1992). 
\bibitem{GH} G. Horowitz, {\it in} B. L. Hu and T. A. Jacobson (eds) {\it
Directions in General Relativity, vol. 2} (Cambridge: Cambridge University Press 1993).  
\bibitem{HWIS} G. Horowitz and T. Wiseman {\it in} G.T.Horowitz (ed) {\it Black Holes in 
Higher Dimensions} (Cambridge: Cambridge University Press 2012).
\bibitem{MT} M. Taylor, {\it Partial Differential Equations III} (New York: Springer 1996).
\bibitem{JC} J. Chase, {\it Commun. Math. Phys.}~{\bf 19}, 276 (1970).  
\bibitem{CM01} Y. Choquet-Bruhat and V. Moncrief, {\it Ann. Henri Poincar\'e} {\bf 2} 1007 (2001).
\bibitem{CLSZ} C. Charmousis, D Langlois, D Steer and R. Zegers, {\it JHEP 0702:064}
(2007). 
\bibitem{GM} G. Gibbons, K. Maeda, {\it Nucl. Phys.}~{\bf B 298} 741 (1988). 

\end{thebibliography}
\end{document}